\newtheorem{theorem}{Theorem}[section]
\newtheorem{lemma}[theorem]{Lemma}
\title{Infinite Networks, Halting and Local Algorithms}
\author{Antti Kuusisto\thanks{The author acknowledges that
this work was carried out during a tenure of the
ERCIM ``Alain
Bensoussan" Fellowship Programme.
The reported research has received funding from the
European Union
Seventh Framework
Programme (FP7/2007-2013) under
grant agreement
number 246016.}
\institute{Institute of Computer Science\\ University of Wroc\l aw\\ Poland}
\email{antti.j.kuusisto@uta.fi}
}
\begin{document}
\maketitle

\begin{abstract}
The immediate past has witnessed an increased amount of interest in local algorithms, i.e., constant time distributed algorithms. In a recent survey of the topic (Suomela, ACM Computing Surveys, 2013), it is argued that local algorithms provide a natural framework that could be used in order to theoretically control infinite networks in finite time. We study a comprehensive collection of distributed computing models and prove that if infinite networks are included in the class of structures investigated, then every universally halting distributed algorithm is in fact a local algorithm. To contrast this result, we show that if only finite networks are allowed, then even very weak distributed computing models can define nonlocal algorithms that halt everywhere. The investigations in this article continue the studies in the intersection of logic and distributed computing initiated in (Hella et al., PODC 2012) and (Kuusisto, CSL 2013).
%

%
%
%
%
\end{abstract}

\section{Introduction}

This work is a study of deterministic distributed algorithms for arbitrary networks,
including infinite structures in addition to finite ones.
In the recent survey article \cite{suomela}, Suomela points out that distributed constant-time algorithms are a reasonable
choice for theoretically controlling infinite networks in finite time. In this article we show that for a rather comprehensive
collection of models of distributed computing, constant-time algorithms are in a
sense the \emph{only} choice. We define a
framework---based on a class of message passing automata and relational structures---that contains a
comprehensive variety of models of distributed computing in \emph{anonymous networks}, i.e., networks
without $\mathrm{ID}$-numbers.
We then show that if infinite networks are allowed,
then \emph{all universally halting algorithms} definable in the framework are in fact local algorithms, i.e.,
distributed constant-time algorithms. 
The widely studied \emph{port-numbering model} (see \cite{angluin, hella, hella2})
of distributed computing can be directly extended to a framework that contains infinite structures
in addition to finite ones.
In the port-numbering model, a node of degree $k\leq n$, where $n$ is a globally known
finite degree bound, receives messages through $k$ input ports
and sends messages through $k$ output ports.
The processors in the nodes can send different messages to different neighbours,
and also see from which port incoming messages arrive.
There are no $\mathrm{ID}$-numbers 
in this framework. The omission of $\mathrm{ID}$-numbers is well justified when infinite
networks are studied: in most natural theoretical frameworks for the
modelling of computation in infinite networks,
\emph{even the reading of all local $\mathrm{ID}$s in the 
beginning of computation would take infinitely long}. Thus typical synchronized
communication using $\mathrm{ID}$-numbers would be impossible.
There are several fields of study outside distributed computing
where the objects of investigation can be regarded as infinite
distributed anonymous communication networks.
\emph{Cellular automata}
%
%
provide probably the most
obvious and significant example of such a framework.
But of course there are various others.
Crystal lattices and the brain, for example, are 
massive network systems often modelled by infinite structures.
%
%
%

%
Below we define a \emph{general  
distributed computing model} based on relational structures and synchronized message passing automata.
The port-numbering model $\mathrm{VV_c}$ of \cite{hella, hella2} and all its subclasses can be directly
simulated in our framework by restricting attention to suitable
classes of structures and automata.
%
%
We establish (Theorem \ref{maintheorem}\hspace{0.4mm}) that if $\mathcal{H}$ is a class of
communication networks definable by a 
\emph{first-order theory}, then all universally halting algorithms over $\mathcal{H}$ are local algorithms.
For example, the classes of networks for the $\mathrm{VV}_c$ model are easily seen to be definable by
first-order formulae, \emph{as long as infinite structures are allowed}.
In fact, when the requirement of finiteness is lifted,
all classes of structures in the comprehensive study
in \cite{hella, hella2} can easily be seen to be first-order definable.
The proof of Theorem \ref{maintheorem} makes a \emph{crucial use of logic},
thereby extending the work initiated in \cite{hella, hella2}
and developed further in \cite{kuusi}.
The articles \cite{hella, hella2, kuusi} extend the scope of \emph{descriptive complexity theory}
(see \cite{ebbinghaus, Immerman, libkin}) to the realm of distributed computing by identifying a
highly canonical one-to-one link between \emph{local} algorithms and
formulae of modal logic.
This link is based on the novel idea of 
directly identifying \emph{Kripke models}
and distributed communication networks with each other.
Under this interpretation, arrows of the accessibility relations of
Kripke models are considered to \emph{\textbf{be}} communication channels
between processors in distributed networks.
This idea has turned out to be fruitful because it enables the
transfer of results between modal logic and distributed computing.
For example, in \cite{hella, hella2} a novel separation argument
concerning distributed complexity classes is obtained by applying the
\emph{bisimulation method} (see \cite{johan, han, blackburn}) of modal logic to
distributed communication networks.
In this article we adapt the link between modal logic
and distributed computing for the purpose of
proving Theorem \ref{maintheorem}\hspace{0.4mm}.
We first obtain a characterization of 
halting behaviour in terms of modal formulae.
This facilitates the use of the \emph{compactness theorem} (see \cite{ebbinghaus}),
which is the final step in our  proof.
To contrast Theorem \ref{maintheorem}\hspace{0.4mm}, we investigate halting
behaviour of distributed message passing automata in the finite.
We establish that even extremely weak subsystems of the port-numbering model
can define nonlocal halting algorithms when attention is restricted to
finite networks:
Theorem \ref{secondmain} shows that even if message passing automata in the port-numbering
model have absolutely no access to port numbers whatsoever,
nonlocal but universally halting behaviour is possible.

In order to prove Theorem \ref{secondmain}\hspace{0.4mm}, we employ tools from
\emph{combinatorics on words}, namely, the 
infinite \emph{Thue-Morse sequence} (see \cite{allouche}). This infinite binary sequence is known to be cube-free,
i.e., it does not have a prefix of the type $tuuu$, where $u$ is a nonempty word. This 
lack of periodicity allows us to design an appropriate algorithm that is halting but nonlocal
in the finite.
\section{Preliminaries}\label{preliminaries}
%
%

%
%

%
Let $\Pi$ be a finite  set of
\emph{unary relation symbols} $P\in\Pi$ and $\mathcal{R}$ a finite set of
\emph{binary relation symbols} $R\in\mathcal{R}$. These symbols are also called
\emph{predicate symbols}. 
The set of $(\Pi,\mathcal{R})$-\emph{formulae} of \emph{modal logic} $\mathrm{ML}(\Pi,\mathcal{R})$
is generated by the grammar
$$\varphi\ ::=\ \top\ |\ P\ |\ \neg\varphi\ |\ (\varphi_1\wedge\varphi_2)\ |\ \langle R\, \rangle \varphi,$$
where $P$ is any symbol in $\Pi$,  $R$ any symbol in $\mathcal{R}$, and $\top$ is a logical constant  symbol.
Let $\mathrm{VAR} = \{\ x_i\ |\ i\in\mathbb{N}\ \}$ 
be a set of \emph{variable symbols}. 
The set of $(\Pi,\mathcal{R})$-\emph{formulae} of \emph{first-order logic} $\mathrm{FO}(\Pi,\mathcal{R})$ 
is generated by the grammar
$$\varphi\ ::=\ \top\ |\ x = y\ |\ 
P(x)\ |\ R(x,y)\ |\ \neg\varphi\ |\ (\varphi_1\wedge\varphi_2)\ |\ \exists x\, \varphi,$$
where $x$ and $y$ are symbols in $\mathrm{VAR}$,
$P$ a symbol in $\Pi$,  $R$ a symbol in $\mathcal{R}$, and $\top$ a logical constant  symbol.
For both logics, we define the abbreviation $\bot := \neg\top$. We also
use the abbreviation symbols $\vee$, $\rightarrow$ and $\leftrightarrow$ in the usual way.
The \emph{modal depth} $\mathit{md}(\varphi)$ of a formula is defined recursively such that
$\mathit{md}(\top) = \mathit{md}(P) = 0$, $\mathit{md}(\neg\, \psi) = \mathit{md}(\psi)$,
$\mathit{md}(\psi\wedge\chi) = \mathit{max}\{\mathit{md}(\psi),\mathit{md}(\chi)\}$, and
$\mathit{md}(\langle R\, \rangle\, \psi) = \mathit{md}(\psi) + 1$.
Let $\Pi = \{P_1,...,P_n\}$ and $\mathcal{R} = \{R_1,...,R_m\}$.
A $(\Pi,\mathcal{R})$-\emph{model} 
is a structure
$$M\ :=\ (W,P_1^M,...,P_n^M, R_1^M,...,R_m^M),$$
where $W$ is an arbitrary nonempty set (the \emph{domain}
of the model $M$), 
each $P_i^M$ is a unary relation $P_i^M\subseteq W$, and
each $R_i^M$ a binary relation $R_i^M\subseteq W\times W$.
The semantics of $\mathrm{ML}(\Pi,\mathcal{R})$ is defined
with respect to \emph{pointed $(\Pi,\mathcal{R})$-models}
$(M,w)$, where $M = (W,P_1^M,...,P_n^M,R_1^M,...,R_m^M)$
is a $(\Pi,\mathcal{R})$-model and $w\in W$ a \emph{point} or a \emph{node} of (the
domain of) $M$.
For each $P_i\in\Pi$, we define $(M,w)\models P_i$ iff $w\in P_i^M$. We also define $(M,w)\models \top$.
We then recursively define
\[
\begin{array}{lll}
(M,w)\models \neg\, \varphi\ \ &\Leftrightarrow\ \ \ (M,w)\not\models\varphi,\\
(M,w)\models (\varphi\wedge\psi)\ \ &\Leftrightarrow\ \ \ (M,w)\models \varphi\text{ and }(M,w)\models\psi,\\
(M,w)\models \langle R_i\, \rangle\, \varphi\ \ &\Leftrightarrow\
\ \ \ \exists v\in W\bigl(\, (w,v)\in R_i^M\text{ and }(M,v)\models\varphi\,\bigr).
\end{array}
\]
The semantics of $\mathrm{FO}(\Pi,\mathcal{R})$ is defined
in the usual way with respect to \emph{$(\Pi,\mathcal{R})$-interpretations}
$(M,f)$, where
$$M\ =\ (W,P_1^M,...,P_n^M, R_1^M,...,R_m^M)$$
is a $(\Pi,\mathcal{R})$-model and $f$ is an \emph{assignment function}
$f:\mathrm{VAR}\rightarrow W$ giving an interpretation to the variables in $\mathrm{VAR}$.
We define $(M,f)\models x = y\, \Leftrightarrow\, f(x) = f(y)$, $(M,f)\models P_i(x)\, \Leftrightarrow\, f(x)\in P_i^M$,
and $(M,f)\models R_i(x,y)\, \Leftrightarrow\, (f(x),f(y))\in R_i^M$. We also define $(M,f)\models\top$.
We then recursively define
\[
\begin{array}{lll}
(M,f)\models \neg\, \varphi\ \ &\Leftrightarrow\ \ \ (M,f)\not\models\varphi,\\
(M,f)\models (\varphi\wedge\psi)\ \ &\Leftrightarrow\ \ \ (M,f)\models \varphi\text{ and }(M,f)\models\psi,\\
(M,f)\models \exists x\varphi\ \ &\Leftrightarrow\ \ \ \ \exists v\in W\bigl(\, (M,f[x\mapsto v])\models\varphi\,\bigr),
\end{array}
\]
where $f[x\mapsto v]$ is the function $g:\mathrm{VAR}\rightarrow W$ such that
\[
 g(y) = \begin{cases}
    v & \text{ if } y=x,\\
    f(y) & \text{ if } y\not=x.
 \end{cases}
\]
It is well known that modal logic can be directly translated into first-order logic.
We define the \emph{standard translation} from $\mathrm{ML}(\Pi,\mathcal{R})$ into $\mathrm{FO}(\Pi,\mathcal{R})$ in
the following way. We let $\mathit{St}_x(\, \top )\, :=\, \top$, $\mathit{St}_x(P_i )\, :=\, P_i(x)$,
$\mathit{St}_x(\, (\varphi\wedge\psi)\, )\, :=\, \bigl(\mathit{St}_x(\varphi )\wedge\mathit{St}_x(\psi)\bigr)$,
$\mathit{St}_x(\neg\varphi)\, :=\, \neg\mathit{St}_x(\varphi)$, and $\mathit{St}_x(\langle R_i\, \rangle\varphi) := 
\exists y\bigl( R_i(x,y)\wedge \mathit{St}_y(\varphi)\bigr)$.
Here $y$ is a fresh variable distinct from $x$.
It is easy to see that $(M,w)\models\varphi$ iff
$(M,f[x\mapsto w])\models \mathit{St}_x(\varphi)$. 
Due to the standard translation, modal logic is often 
considered to be simply a \emph{fragment} of first-order logic.
We next fix some conventions concerning \emph{sets}
of formulae. We only discuss formulae of first-order logic,
but analogous definitions hold for modal logic.
If $\Phi$ is a set of formulae of $\mathrm{FO}(\Pi,\mathcal{R})$,
then $\bigvee\Phi$ and $\bigwedge\Phi$ denote the \emph{disjunction} and
\emph{conjunction} of the formulae in $\Phi$. The set $\Phi$ can be infinite, but 
then of course neither $\bigvee\Phi$ nor $\bigwedge\Phi$ is a formula of 
%
%
$\mathrm{FO}(\Pi,\mathcal{R})$.
%
%
We define $(M,f)\models\bigvee\Phi$ if there exists at least one formula $\varphi\in\Phi$ such that $(M,f)\models\varphi$.
We define $(M,f)\models\bigwedge\Phi$ if $(M,f)\models\varphi$ for all $\varphi\in\Phi$.
A \emph{set} of formulae of $\mathrm{FO}(\Pi,\mathcal{R})$
is called a \emph{theory} (over the
signature $(\Pi,\mathcal{R})$).\hspace{0.4mm}\footnote{A theory does not have to be closed under
logical consequence. A theory is simply a set of formulae, and
can be infinite or finite.} If $T$ is a theory over the signature $(\Pi,\mathcal{R})$,
then $(M,f)\models T$ means that $(M,f)\models\varphi$ for all $\varphi\in T$.
%
%
%
%
%
When we write $T\models\varphi$, we mean that
the implication $(M,f)\models T\Rightarrow (M,f)\models\varphi$ holds
for all $(\Pi,\mathcal{R})$-interpretations $(M,f)$.
As usual, two $\mathrm{FO}(\Pi,\mathcal{R})$-formulae $\varphi$ and $\psi$
are \emph{equivalent} if the equivalence $(M,f)\models \varphi\Leftrightarrow (M,f)\models \psi$ holds
for all $(\Pi,\mathcal{R})$-interpretations $(M,f)$.
%
%
%
%

%
Let $\mathcal{H}$ be a class of \emph{pointed} $(\Pi,\mathcal{R})$-models, and let $\mathcal{K}\subseteq\mathcal{H}$.
A modal formula $\varphi$ \emph{defines} the class $\mathcal{K}$ \emph{with respect to} $\mathcal{H}$, if
for all $(M,w)\in\mathcal{H}$, we have $(M,w)\models\varphi\ \Leftrightarrow\ (M,w)\in\mathcal{K}$.
%
%
If some formula $\psi$ defines a class $\mathcal{J}$ of pointed $(\Pi,\mathcal{R})$-models with respect to the
class of all pointed $(\Pi,\mathcal{R})$-models, we simply say that $\psi$ defines $\mathcal{J}$.
%

%
%
%
%
%
%
%
If $\varphi$ is a \emph{sentence} of $\mathrm{FO}(\Pi,\mathcal{R})$ 
and $M$ a $(\Pi,\mathcal{R})$-model,
we write $M\models\varphi$ if $(M,f)\models\varphi$
for some assignment $f$. (Trivially, whether $(M,f)\models\varphi$ holds or not, does not
depend on $f$ when $\varphi$ is a sentence.)
If $T$ is a theory consisting of $\mathrm{FO}(\Pi,\mathcal{R})$-sentences,
we write $M\models T$ iff $M\models\psi$ for all $\psi\in T$.
Let $\mathcal{J}$ be a class of \emph{pointed} $(\Pi,\mathcal{R})$-models
and $T$ a theory consisting of \emph{$\mathrm{FO}(\Pi,\mathcal{R})$-sentences}.
We say that the \emph{first-order theory $T$ defines the class $\mathcal{J}$
of pointed models} if for all pointed $(\Pi,\mathcal{R})$-models $(M,w)$, we have
$M\models T\Leftrightarrow (M,w)\in\mathcal{J}$.
Notice indeed that accoring to this convention, if $T$
defines a class $\mathcal{J}$ of pointed models
and if $w$ is a point in
the domain of $M$ and $(M,u)$ a pointed model in $\mathcal{J}$, then
we have $(M,w)\in\mathcal{J}$.
If a first-order theory $T$ defines a class $\mathcal{H}$ of pointed models,
then we say that $\mathcal{H}$ is \emph{definable} by
the first-order theory $T$.
If $\mathcal{H}$ is definable by a theory $\{\varphi\}$ containing a
single first-order $(\Pi,\mathcal{R})$-sentence $\varphi$, we say that $\mathcal{H}$
is definable by the first-order sentence $\varphi$.
Let $\Pi$ and $\mathcal{R} = \{\, R_1,...,R_k\, \}$ be finite sets of unary and
binary relation symbols, respectively.
A \emph{message passing automaton} $A$ over the signature 
$(\Pi,\mathcal{R})$, or a $(\Pi,\mathcal{R})$-automaton,
is a tuple
$$(Q,\mathcal{M},\pi,\delta,\mu,F,G)$$
defined as follows.
$Q$ is a nonempty set of \emph{states}. $Q$ can be finite or countably infinite.
$\mathcal{M}$ is a nonempty set of \emph{messages}.
$\mathcal{M}$ can be finite or countably infinite.
For a set $S$, we let $\mathit{Pow}(S)$ denote the power set of $S$.
$\pi:\mathit{Pow}(\Pi)\rightarrow Q$ is an \emph{initial transition function} that
determines the beginning state of the automaton $A$. $\delta:((\mathit{Pow}(\mathcal{M}))^k\times Q)\ \rightarrow\ Q$
is a \emph{transition function} that constructs a new state $q\in Q$
when given a $k$-tuple $(N_1,...,N_k)\in(\mathit{Pow}(\mathcal{M}))^k$ of received message sets
and the current state. $\mu:(Q\times\mathcal{R}) \rightarrow \mathcal{M}$ is a \emph{message construction function}
that constructs a message for the automaton to send forward when given the current state of the automaton and a
\emph{communication channel} $R_i\in\mathcal{R}$.
$F\subseteq Q$ is the set of \emph{accepting states} of the automaton.
$G\subseteq Q\setminus F$ is the set of \emph{rejecting states} of the automaton.
Let $\mathcal{R} = \{\, R_1,...,R_k\, \}$ and $\Pi = \{\, P_1,...,P_m\, \}$.
Let $(M,w)$ be a $(\Pi,\mathcal{R})$-model. The set of $R_i$-\emph{predecessors} of $w$ is the set of
nodes $u$ in the domain of $M$ such that $R_i(u,w)$, and the set of $R_i$-\emph{successors} of $w$ is
the set of nodes $u$ such that $R_i(w,u)$. The set of $R_i$-successors of $w$ is denoted by $\mathit{succ}(R_i,w)$.
%
%
%

%
A message passing $(\Pi,\mathcal{R})$-automaton $A$ 
is \emph{run} on a $(\Pi,\mathcal{R})$-model $M = \bigl(W,R_1,...,R_k,P_1,...,P_m\bigr)$,
considered to be a distributed system. We first give an intuitive
description of the computation of the distributed system defined by
the automaton $A$ and the model $M$, and then define the computation
procedure more formally.
On the intuitive level, we place a copy $(A,w)$ of the automaton $A$ to each node $w\in W$.
Then, each automaton $(A,w)$  first scans the \emph{local  information} of the node $w$, i.e.,
finds the set of unary relation symbols $P_i\in\Pi$ such that $(M,w)\models P_i$, and then makes a transition to
a \emph{beginning state} based on the local information.
The local information at $w$ can be considered to be an $m$-bit string $t$ of zeros and ones such
that the $i$-th bit of $t$ is $1$ iff $(M,w)\models P_i$.
After scanning the local information, the automata $(A,w)$, where $w\in W$,
begin running in \emph{synchronized steps}. During
each step, each automaton $(A,w)$ sends, for each $i\in\{\, 1,...,k\, \}$, a message $m_i$ to the
$R_i$-\emph{predecessors} of $w$.\hspace{0.4mm}\footnote{Therefore information flows opposite to the direction of
the arrows (i.e., ordered pairs) of $R_i$. The reason for this choice is technical, and could be avoided.
The choice is due to the relationship between modal logic and message passing automata.
A possible alternative approach would be to consider modal logics with the truth of $\langle R_i\, \rangle\varphi$
defined such that $(M,w)\models \langle R_i\, \rangle\varphi$
iff $\exists v\in W\bigl(\, (v,w)\in R_i^M\text{ and }(M,v)\models\varphi\,\bigr)$.}
The automaton $(A,w)$ also receives a tuple $(N_1,...,N_k)$ of message sets $N_i$ such that
set $N_i$ is received from 
the $R_i$-successors of $w$. Then the automaton updates its state based on the received messages and
the current state.
More formally, a $(\Pi,\mathcal{R})$-model $\bigl(W,R_1,...,R_k,P_1,$ $...,P_m\bigr)$
and a $(\Pi,\mathcal{R})$-automaton
$$A\ :=\ \bigl(Q,\mathcal{M},\pi,\delta,\mu,F,G\bigr)$$
define a
synchronized distributed computation system which executes \emph{communication
rounds} defined as follows. Each round $n\in\mathbb{N}$ defines a \emph{global configuration} $f_n:W\rightarrow Q$.
The configuration $f_0$ of the zeroth round is the function $f_0$ such that $f_0(w) = \pi(\{\ P\in\Pi\ |\ w\in P^M\ \})$
for all $w\in W$.
Recursively, assume that we have defined $f_n$,
and let $(N_1,...,N_k)$
be a tuple of message sets 
$$N_i\ =\ \bigl\{\ m\in \mathcal{M}\ |\ m = \mu( f_{n}(v), R_i),\   v\in\mathit{succ}(R_i,w)\ \bigr\}.$$
Then
$f_{n+1}(w)\ =\ \delta\bigl(\, (N_1,...,N_k),\, f_{n}(w)\, \bigr)$.
When we talk about \emph{the state of the automaton $A$ at the node $w$ in round $n$}, we mean
the state $f_n(w)$. We define that an automaton $A$ \emph{accepts} a
pointed model $(M,w)$ if there exists some $n\in\mathbb{N}$
such that $f_n(w) \in F$, and furthermore, for all $m<n$, $f_m(w) \not\in G$.
Similarly, $A$ \emph{rejects} $(M,w)$ if there exists some $n\in\mathbb{N}$
such that $f_n(w) \in G$, and for all $m<n$, $f_m(w) \not\in F$. 
Notice that $A$ may keep passing messages and changing state even after it has accepted or rejected.
Automata that stop sending messages after accepting or rejecting can be 
modelled in this framework by automata that begin sending only the message
``I have halted" once they have accepted or rejected.
(Notice that the behaviour of the distributed system does not
have to be Turing computable in any sense.)
Let $\mathcal{C}$ be the class of all pointed $(\Pi,\mathcal{R})$-models. Let $\mathcal{H}\subseteq\mathcal{C}$.
We say that $A$ accepts (rejects) $\mathcal{H}$
if the class of pointed models in $\mathcal{C}$ that $A$ accepts (rejects) is $\mathcal{H}$.
Let $\mathcal{J}\subseteq \mathcal{K}\subseteq\mathcal{C}$.
We say that $A$ accepts (rejects) $\mathcal{J}$ in $\mathcal{K}$
if the class of pointed models in $\mathcal{K}$ that $A$ accepts (rejects) is $\mathcal{J}$.
A $(\Pi,\mathcal{R})$-automaton $A$
\emph{converges} in the class $\mathcal{K}$ if for all $(M,w)\in\mathcal{K}$, the automaton $A$ either
accepts or rejects $(M,w)$.
%
%
A $(\Pi,\mathcal{R})$-automaton $A = \bigl(Q,\mathcal{M},\pi,\delta,\mu,F,G\bigr)$
\emph{halts} in $\mathcal{K}$
if $A$ converges in $\mathcal{K}$, and furthermore,
for each state $q\in F\cup G$ that is obtained by $A$
at some $(M,w)\in\mathcal{K}$, the state of $A$ at $(M,w)$
will be $q$ forever once $q$ has been obtained for the first time.
%
%
%
We say that the automaton $A$ \emph{specifies a local algorithm} in $\mathcal{K}$
if there exists some $n\in\mathbb{N}$ such that for all $(M,w)\in\mathcal{K}$,
the automaton $A$ accepts or rejects $(M,w)$ in some round $m\leq n$.
The smallest such number $n$ is called the \emph{effective running time} of $A$ in $\mathcal{K}$.
For the sake of curiosity, note that even if $A$ specifies a local algorithm, it
does not necessarily halt. However, a corresponding halting automaton of course exists.
Let $\mathcal{K}$ be a class of pointed $(\Pi,\mathcal{R})$-models.
%
%
%
%
%
%
When we say that an algorithm $A$ (or more 
rigorously, a $(\Pi,\mathcal{R})$-automaton $A$) is \emph{strongly
nonlocal} in $\mathcal{K}$, we mean that there exists no $(\Pi,\mathcal{R})$-automaton
$B$ that specifies a local algorithm in $\mathcal{K}$ and accepts
exactly the the same pointed models in $\mathcal{K}$ as $A$.
Our framework with $(\Pi,\mathcal{R})$-automata
operating on $(\Pi,\mathcal{R})$-models is rather flexible
and general. For example, each system in 
the comprehensive collection of
\emph{weak models of distributed computing}
studied in \cite{hella, hella2} can be directly
simulated in our framework by restricting attention to suitable
classes of $(\Pi,\mathcal{R})$-structures and $(\Pi,\mathcal{R})$-automata.
Let us have a closer look at this matter.
%
%
%

%
%
%
%
%
%
%
%
%
%
%
%
%
%
%

%
Let $\mathcal{R} = \{R\}$ and let $\Pi$ be any finite set.
If $M$ is $(\Pi,\mathcal{R})$-model, where $R^M$ is a symmetric and irreflexive binary relation, then
$M$ is an \emph{$\mathrm{SB}(\Pi)$-model}. The letter $\mathrm{S}$ stands
for the word \emph{set} and the letter $\mathrm{B}$ for \emph{broadcast}.
The intuition behind the framework provided by $\mathrm{SB}(\Pi)$-models is that
message passing automata see \emph{neither input port numbers nor output port numbers}.
This means that the state transition of an automaton depends only on the
current state and the \emph{set} of messages received---rather than
the multiset for example---and an
automaton must \emph{broadcast} the same message to \emph{each} of its
neighbours during a communication round.
It is not possible to send different messages to different neighbours during the
same communication round.
The framework provided by $\mathrm{SB}(\Pi)$-models is similar to
the weakest (in computational capacity) computation model $\mathrm{SB}$ studied in \cite{hella, hella2}.
In fact, the framework of $\mathrm{SB}(\Pi)$-models in the current paper is a canonical
generalization of the model $\mathrm{SB}$ in \cite{hella, hella2}.
In the article \cite{hella, hella2}, all \emph{classes} of structures
studied are always associated with a finite maximum degree bound,
and furthermore, all structures are assumed to be finite.
In the current article, such restrictions need not apply.
Also, we allow arbitrary interpretations of the unary relation symbols in $\Pi$,
while in the $\mathrm{SB}$ model of \cite{hella, hella2}, unary relation symbols
always indicate the degree of a node in a network (and nothing else).\hspace{0.4mm}\footnote{We 
do not need the to define the $\mathrm{SB}$ model
used in \cite{hella, hella2} for the purposes of the current article.
For the precise definition, see \cite{hella, hella2}. It is worth mentioning here once more, however,  
that all systems studied in \cite{hella, hella2} can be directly simulated in our framework
by simply restricting attention to suitable automata and suitable classes of pointed models.}
The reason for generalizing the definition of \cite{hella, hella2} is that in the current paper
we opt for generality as well as increased mathematical simplicity.
The philosophy in \cite{hella, hella2} is more application oriented.
%

%
%
%
%
%
%
%
%
%
%
%
%

%
Let $n\in\mathbb{N}\setminus\{0\}$ and $S = \{\, 1,...,n\, \}$.
Let $\Pi = \{\, P_0,...,P_n\, \}$ and $\mathcal{R} = \{\, R_{(i,j)}\ |\ (i,j)\in S\times S\, \}$.
A pointed $(\Pi,\mathcal{R})$-model $(M,w)$ is an \emph{$n$-port-numbering structure}, or a $\mathrm{PN}(n)$-structure,
if it satisfies the following (admittedly long and technical, and for the 
current paper rather unimportant) list of conditions.
\begin{enumerate}
\item
The union $R$ of the relations $R_{(i,j)}^M$ is a symmetric and irreflexive
relation.
%
%
\item
For any two distinct pairs $(i,j),(k,l)\in S\times S$, if $R_{(i,j)}^M(u,v)$, then $R_{(k,l)}^M(u,v)$ does not hold.
\item
For each $(i,j)\in S\times S$, if $R_{(i,j)}^M(u,v)$, then $R_{(j,i)}^M(v,u)$.
\item
For each $(i,j)\in S\times S$, the out-degree and in-degree of $R_{(i,j)}^M$ is at most one at each node.
%
%
%
%
%
\item
If $R_{(i,j)}(u,v)$ for some nodes $u$ and $v$ and some $i,j\in S$,
then, if $k<i$, there exists some $l\in S$ and some node $v'$ such that $R_{(k,l)}^M(u,v')$.
\item
Similarly, if $R_{(i,j)}^M(u,v)$ for some nodes $u$ and $v$ and some $i,j\in S$,
then, if $k<j$, there exists some $l\in S$ and some node $u'$ such that $R_{(l,k)}^M(u',v)$. 
\item
Finally, for each node $u$ and each $i\in \{0,...,n\}$, we have $u\in P_i^M$ if and only if the out-degree
(or equivalently, in-degree) of the union $R$ of all the relations in $\mathcal{R}$ is $i$ at $u$.
\end{enumerate}
It is straightforward to show that there exists a first-order $(\Pi,\mathcal{R})$-sentence
$\varphi_{\mathrm{PN}(n)}$ that defines the class $\mathcal{PN}(n)$ of all $\mathrm{PN}(n)$-structures.
This piece of information will be used in the very end of the current article when
we discuss concrete applications of Theorem \ref{maintheorem}\hspace{0.4mm}.
The class of \emph{finite}
$\mathrm{PN}(n)$-structures is \emph{exactly} the collection of communication networks of
maximum degree at most $n$ used in the framework of 
the port-numbering model $\mathrm{VV}_c$ of \cite{hella, hella2}.
The related collection of \emph{$\mathrm{VV}_c$-algorithms}
corresponds to the class of algorithms that can be specified by $\mathrm(\Pi,\mathcal{R})$-automata
that halt in all finite $\mathrm{PN}(n)$-structures.
Therefore the class $\mathcal{PN}(n)$ of 
exactly all $\mathrm{PN}(n)$-structures, together with $(\Pi,\mathcal{R})$-automata, defines a
generalization of the port-numbering model to the context with infinite structures in addition to
finite ones. Theorem \ref{maintheorem} shows that all halting algorithms for $\mathcal{PN}(n)$
are constant-time algorithms.
There are no nonlocal halting algorithms in the framework of the port-numbering model when infinite structures are
included in the picture.
%
%

%
The port numbering model $\mathrm{VV_c}$ has been studied
extensively since the 1980s. The related investigations
were originally initiated by Angluin in \cite{angluin}.
Section 3 of \cite{hella2} gives a brief and accessible
introduction to the port-numbering model and
its relation to other models of distributed computing.
%

%
%
%
%
%
%

%
%

%
%

%
\section{Halting in the Finite}\label{finite}
In this section we prove that when 
attention is restricted to finite structures, halting
and strongly nonlocal algorithms exist
even when the model of computing is defined by $\mathrm{SB}(\Pi)$-models.
While the existence of such algorithms may not be surprising,
it is by no means a trivial matter. Indeed, as we shall see in Section \ref{arbitrary}\hspace{0.4mm},
no such algorithms exist when infinite structures are included in the picture. 
Let $\Pi = \{\, P_0,P_1,Q_1,Q_2,Q_3\, \}$ and $\mathcal{R} = \{R\}$.
We will show that there exists a
strongly nonlocal algorithm that halts in the class of finite $\mathrm{SB}(\Pi)$-models.
%
%

%
%
We begin by sketching a \emph{rough} intuitive description of the algorithm.
The unary relation symbols $P_0$ and $P_1$ will be used in order to define binary words in $\{\, 0,1\}^*$
that correspond to \emph{finite walks} in $(\Pi,\mathcal{R})$-models.\hspace{0.4mm}\footnote{
A finite walk in a $(\Pi,\mathcal{R})$-model $M$ is a function from some initial segment of $\mathbb{N}$
into the domain of $M$ such that 
$\bigl(f(i),f(i+1)\bigr)\in R^{M}$ for each pair $(i,i+1)$ of indices in
the domain of $f$. A finite word $s_0...s_k = s\in\{0,1\}^*$ \emph{corresponds} to a walk $f$
iff we have $f(i) \in P_{s_i}^{M}$ for each $i \in \{0,...,k\}$.}
Each pair $(A,w)$, where $A$ is an automaton and $w$ a node,
will store a dynamically growing set of increasingly long finite binary words that correspond to 
walks that originate from $w$. The walks will be oriented by the relation symbols $Q_1$, $Q_2$ and $Q_3$
such that if a node $u$ is labelled by $Q_i$, then its successor is labelled by $Q_{p(i)}$,
where $p:\{\, 1,2,3\, \}\rightarrow\{\, 1,2,3\, \}$ 
is the cyclic permutation $1\mapsto 2\mapsto 3\mapsto 1$.
A pair $(A,w)$ will halt if it records some word $s\in\{\, 0,1\, \}^*$ that contains a
\emph{cube} as a factor, i.e.,  a word $s\ =\ tuuuv$, where 
$u$ is a \emph{nonempty} word in $\{\, 0,1\, \}^*$ and $t,v\in\{\, 0,1\, \}^*$.
Upon halting, $(A,w)$ will send an instruction to
halt to its neighbours, who will then pass the message on and also halt. Thus the halting instruction
will spread out in the connected component of $w$, causing further nodes to halt.
In addition to detecting a word with a cube factor, a globally spreading 
halting instruction can also be generated due to the detection of an undesirable
labelling pattern defined by the unary predicates in $\Pi$. For example, if a node $w$
satisfies both predicates $P_0$ and $P_1$, then the labelling pattern at $w$
is undesirable. The intuition is that then $w$ does not uniquely specify an alphabet in $\{0,1\}$,
and thereby destroys our intended labelling scheme.
Similarly, a halting instruction is generated if a violation of the cyclic permutation scheme
of the predicates $Q_1,Q_2,Q_3$ is detected.
A node accepts iff it halts in a round $n\in\mathbb{N}$ for some positive even number $n$.
Otherwise it rejects upon halting.
We shall see that the algorithm is halting and strongly nonlocal in the finite.
Strong nonlocality will follow from the existence of arbitrarily long cube-free finite words.
Indeed, there exists an infinite cube-free word, known as the
\emph{Thue-Morse sequence} (see \cite{allouche} for example).
We then define the algorithm formally.
Let us say that a node $w$
is a \emph{$Q_1$-node} if $(M,w)\models Q_1 \wedge \neg Q_2\wedge \neg Q_3$.
Similarly, $w$ is a \emph{$Q_2$}-node if
$(M,w)\models Q_2 \wedge \neg Q_1\wedge \neg Q_3$
and a \emph{$Q_3$-node} if
$(M,w)\models Q_3 \wedge \neg Q_1\wedge \neg Q_2$.
A node $w$ is \emph{properly oriented} if $w$ is a $Q_i$-node for some $i\in\{1,2,3\}$,
and furthermore, $w$ has a $Q_j$-node as a neighbour if and only if $j\in\{1,2,3\}\setminus\{i\}$.
A node $w$ is \emph{properly labelled} if it is properly oriented, and furthermore, 
either $(M,w)\models P_0\wedge\neg P_1$ or $(M,w)\models P_1\wedge\neg P_0$.
Let $\{\, 0,1\, \}^+$ denote the set $\{\, 0,1\, \}^*\setminus\{\, \lambda\, \}$, where $\lambda$ is the empty word.
Let $\mathcal{L}$ be the set of finite subsets of $\{\, 0,1\, \}^+$.
The set of states of the automaton $A$ that defines our algorithm is the set
$$\mathbb{S}\ :=\ \mathcal{L}\times\{\, 0,1\, \}
\times\{\, Q_1,Q_2,Q_3\, \}\times\{\, \mathit{run},\mathit{halt}\, \}\times\{\, 0,1\}$$
of quintuples, together with an extra finite set $H$ of \emph{auxiliary states}.
The set of messages is
$$\mathbb{M}\ :=\ \mathcal{L}\times\{\, 1,2,3\, \}\times \{\, \mathit{run},\mathit{halt}\, \}$$
of triples, together with an additional finite set $H'$ of \emph{auxiliary messages}.
We next discuss the intuition behind the definition of the states in $\mathbb{S}$.
The first set $S_1$ of a state 
$(S_1,S_2,S_3,S_4,S_5)\in\mathbb{S}$ of a node $w$ in round $n$ encodes a collection of words corresponding to
walks originating from $w$.
%
%
%
The longer the automaton computes, the longer the words in $S_1$ get.
The second and third sets $S_2$ and $S_3$ are used in order to be able to detect nodes that are not properly labelled.
The second set $S_2$ (intuitively) encodes the symbol $P\in \{P_0,P_1\}$ satisfied by the node $w$:
assuming that the labelling scheme at $w$ is 
fixed such that $(M,w)\models P_0 \wedge \neg P_1$ or
$(M,w)\models P_1 \wedge \neg P_0$, then we have $S_2 = i$ iff $w$ satisfies $P_i$.
Similarly, the third set $S_3$ intuitively encodes the symbol $Q\in \{Q_1,Q_2,Q_3\}$ such that $(M,w)\models Q$.
The fourth and fifth sets $S_4$ and $S_5$ control the halting of the node $w$.
A state $(S_1,S_2,S_3,S_4,S_5)$ is an accepting final state if $S_4 =\mathit{halt}$ and $S_5 = 0$,
and rejecting final state if $S_4 =\mathit{halt}$ and $S_5 = 1$.
The state $S_5\in \{0,1\}$ simply counts whether the current computation step is even or odd.
The set $S_1$ of a message $(S_1,S_2,S_3)$ is a
set of binary words. $S_1$ corresponds to the language recorded by the sending node.
$S_2$ encodes the label in $\{ Q_1,Q_2,Q_3 \}$ that labels the sending node.
$S_3$ is a halting instruction if $S_3 = \mathit{halt}$.
In the very beginning of the computation,
the algorithm makes use of the additional states in $H$ 
and messages in $H'$ in order to locally detect nodes that
are not properly labelled. (It is of course possible that such
nodes do not exist.)
Then, if a node $w$ is proper and $(M,w)\models P_x \wedge Q_y$, where $x\in\{0,1\}$ and $y\in\{1,2,3\}$,
the state of $A$ at $w$ in round $1$
is set to be $(\{x\},x,y,\mathit{run},1)$.
If $w$ is not proper, then the state of $A$ at $w$ in round $1$ is
set to be $(\{x'\},x',y',\mathit{halt},1)$, where $x'$ and $y'$ are fixed arbitrarily.
Let $U$ be the set of messages received by a node $w$ in some round $n+1$,
where $n\in\mathbb{N}\setminus\{0\}$.
Let $(S_1,S_2,S_3,S_4,S_5)$ be the state of $w$ in round $n$.
If $S_4 = \mathit{halt}$, then the new state is the same 
state $(S_1,S_2,S_3,S_4,S_5)$.
%
%
Otherwise the new state $(S_1',S_2',S_3',S_4',S_5')$ is defined as follows.
Let $p:\{\, 1,2,3\, \}\rightarrow\{\, 1,2,3\, \}$ be the cyclic permutation
$1\mapsto 2 \mapsto 3 \mapsto 1$.
Assume first that $U$ does not contain a tuple of the form $(X,Y,\mathit{halt})$.
Then we define 
%
%
%
%
$$S_1' = \{\ v\in\{\, 0,1\, \}^*\ |\ v = xu\text{ such that }x = S_2\text{ and }u\in T
%
%
\text{ for some }(T,p(S_3),\mathit{run})\in U\ \}.$$
%
%
%
%
We set $S_2' = S_2$ and $S_3' = S_3$. We let $S_4' = \mathit{halt}$ iff $S_1'$ contains a word with a cube as a factor.
We let $S_5' \in \{\, 0,1\, \}\setminus\{\, S_5\, \}$.
If $U$ contains a tuple of the form $(X,Y,\mathit{halt})$,
we define $(S_1',S_2',S_3',S_4',S_5') = (X',Y',Z,\mathit{halt},x)$,
where $x \in \{\, 0,1\, \}\setminus\{\, S_5\, \}$, and $X'$, $Y'$ and $Z$ are fixed arbitrarily.
Let $(S_1,S_2,S_3,S_4,S_5)$ be the state of $A$ at $w$ in round $n$, where $n\in\mathbb{N}\setminus\{0\}$.
If $S_4 = \mathit{run}$, the  message
broadcast by $A$ at $w$ in round $n+1$ is $(S_1,S_3,\mathit{run})$, and if $S_4 = \mathit{halt}$, the
message is $(X,Y,\mathit{halt})$, where $X$ and $Y$  are fixed arbitrarily.
Recall that the automaton $A$ accepts iff it halts in round $n$ for
some positive even number $n$.
The set of accepting states of the automaton $A$ is exactly the set
of states of the type $(X_1,X_2,X_3,\mathit{halt},0)$.
The set of rejecting states is the set
of states of the type $(X_1,X_2,X_3,\mathit{halt},1)$.
\begin{theorem}\label{secondmain}
Let $\Pi$ be as defined above.
There exists an $\mathrm{SB}(\Pi)$ automaton $A$ that is halting but
strongly nonlocal in the class of finite pointed $\mathrm{SB}(\Pi)$-models.
%
%
%
\end{theorem}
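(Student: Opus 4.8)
The plan is to establish two things separately: first, that the automaton $A$ halts on every finite pointed $\mathrm{SB}(\Pi)$-model, and second, that no local algorithm can accept exactly the same class of pointed models that $A$ accepts. For halting, I would argue by cases on whether the connected component of the pointed node $w$ is properly labelled throughout. If some node in the component fails to be properly labelled (e.g.\ it satisfies both $P_0$ and $P_1$, or violates the cyclic $Q$-permutation scheme), then in round $1$ that node enters a $\mathit{halt}$ state, and by the broadcast mechanism the $(X,Y,\mathit{halt})$ message propagates outward one hop per round; since the component is finite with finite diameter, $w$ receives a halt instruction in finitely many rounds and halts. The more interesting case is a properly labelled finite component. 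Here the key observation is that the set $S_1$ stored at $w$ after $n$ steps consists of binary words of length exactly $n+1$ corresponding to walks of length $n$ originating at $w$ (I would verify by induction on $n$ that the recursive definition of $S_1'$ prepends $S_2$ to words received from $p(S_3)$-labelled neighbours, exactly tracing backward walks). Because the component is finite, sufficiently long walks must repeat a node, and I would use the pigeonhole principle together with the $Q$-orientation to force a periodic pattern: a walk that revisits the same node in the same $Q$-phase produces a repeated binary factor, and extending this walk yields a cube $uuu$ in some recorded word. Thus for large enough $n$ the set $S_1$ contains a word with a cube factor, triggering $S_4' = \mathit{halt}$, and halting again spreads to all of the finite component.

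For strong nonlocality, I would suppose toward a contradiction that some automaton $B$ specifies a local algorithm with effective running time $n_0$ that accepts exactly the same pointed $\mathrm{SB}(\Pi)$-models as $A$. The strategy is to exhibit, for each bound $n_0$, two pointed models that $B$ cannot distinguish in $n_0$ rounds but which $A$ classifies differently. The separating construction would use the Thue--Morse sequence: since it is cube-free, I can take arbitrarily long cube-free binary words $s$ and build a properly labelled finite $\mathrm{SB}(\Pi)$-model realizing $s$ along a walk (laying out the $P_0/P_1$ labels to spell $s$ and cycling the $Q_1,Q_2,Q_3$ labels to orient the walk consistently). On such a model, the node $w$ at the start of the walk never records a cube-containing word until the walk is long enough to wrap around and force one, so the halting round of $A$ at $w$ is controlled by the global combinatorial structure of the word rather than by any fixed-radius neighbourhood. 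By choosing cube-free words that agree on the radius-$n_0$ neighbourhood of $w$ but whose global completion differs in the parity of the halting round (recall $A$ accepts iff it halts in a positive \emph{even} round), I can make $A$ accept one model and reject the other while $B$, seeing only the radius-$n_0$ neighbourhood, must give identical verdicts on both, contradicting that $B$ agrees with $A$.

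The main obstacle I anticipate is the bookkeeping in the separation argument: I must arrange two finite pointed models whose $n_0$-neighbourhoods around $w$ are genuinely isomorphic as $\mathrm{SB}(\Pi)$-models (so that any $n_0$-round local algorithm $B$ is forced to behave identically on them) while ensuring that $A$'s global halting behaviour—specifically the \emph{parity} of the first round in which a cube factor appears, which determines acceptance versus rejection—differs between them. Controlling this parity is delicate, since the halting round depends on exactly how far one must extend the walks before a cube is unavoidable. I expect the resolution to exploit the flexibility of cube-free words: using the infinitude of the Thue--Morse sequence one can pad or splice cube-free prefixes to tune the length at which cube-freeness first fails, and the cyclic $Q$-orientation together with the walk geometry lets one adjust this failure point by a single step, flipping the parity while leaving a prescribed local neighbourhood untouched. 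A secondary technical point to handle carefully is that $S_1$ collects words over \emph{all} walks from $w$, not a single walk, so I must confirm that the first cube to appear does so at the intended round and is not pre-empted by a shorter cube arising from some other branch—this requires the underlying model to be essentially a path (or a carefully chosen sparse graph) so that the recorded language is controlled.
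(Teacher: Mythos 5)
Your halting argument is essentially the paper's: improper nodes broadcast a halt instruction that floods the finite component, and in a fully proper component every infinite cyclically $Q$-oriented walk (which always exists, since a properly oriented $Q_i$-node has a $Q_{p(i)}$-neighbour by definition) must traverse a cycle of length divisible by $3$, whose threefold repetition plants a cube in some recorded word. That half is fine. The gap is in the nonlocality half, and it is exactly the obstacle you flag but do not overcome: you want the separating instances to halt via \emph{cube detection} on properly labelled models realizing cube-free words, and then to control the \emph{parity} of the round at which a cube first appears. This plan has a structural problem before the parity issue even arises: a model that is ``essentially a path'' cannot be properly labelled at its endpoints, since a properly oriented node needs neighbours carrying \emph{both} of the other two $Q$-labels and an endpoint has only one neighbour. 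So on your path-like instances the endpoint violations fire in round $1$ and the halt instruction from the nearer endpoint reaches $w$ long before any cube could be assembled from a cube-free word; the cube-detection mechanism you are trying to tune never governs the outcome. If you instead move to cycles to avoid endpoints, proper $Q$-orientation forces the length to be a multiple of $3$, so you cannot ``adjust the failure point by a single step,'' and pinning down the parity of the first cube occurrence under length changes by multiples of $3$ is left entirely unaddressed.

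The paper's resolution turns your obstacle into the mechanism. It uses path graphs $\mathit{Path}(\nu)$ labelled by Thue--Morse prefixes, where cube-freeness is used only \emph{negatively}: it guarantees that cube detection never fires, so the halting round at the designated node is determined purely by the distance to the endpoint whose labelling violation is detected first. Concretely, with prefixes of lengths $5n+1$ and $5n+2$ and the distinguished node at position $3n$, the relevant endpoint sits at distance $2n$ in one model and $2n+1$ in the other, so the halting rounds differ by exactly one and $A$ accepts exactly one of the two pointed models; meanwhile their radius-$n$ neighbourhoods around position $3n$ are isomorphic, so any automaton with effective running time $n$ must treat them identically. You should replace your cube-parity scheme with this distance-to-endpoint argument (or supply an actual construction controlling the first-cube round, which the current sketch of ``padding or splicing cube-free prefixes'' does not provide).
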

\begin{proof}
We shall first establish that the algorithm defined above halts in the class of finite pointed
$\mathrm{SB}(\Pi)$-models.
Assume that it does not halt in some finite model $(M,w)$.
Thus $w$ must be a proper node.
By symmetry, we may assume that $(M,w)\models Q_1\wedge \neg Q_2\wedge \neg Q_3$.
It is easy to see that for each $n\in\mathbb{Z}_+$, the node $w$ must be the first member $w_1$ of
some finite walk $(w_i)_{i\in\{1,..,n\}}$ of proper nodes that satisfy the predicates $Q_i$
in the cyclic fashion such that
$(M,w_1)\models Q_1$, $(M,w_2)\models Q_2$, $(M,w_3)\models Q_3$,
$(M,w_4)\models Q_1$,
and so on.
Therefore, since $M$ is a finite model,  the node $w$ must be the first member $w_1$ of
some infinite walk $(w_i)_{i\in\mathbb{Z}_+}$ of proper nodes that satisfy the predicates $Q_i$ in
the cyclic fashion. The infinite walk must contain a cycle.
The cycle will generate a word with a cube factor that will ultimately be detected at $w$.
Therefore the automaton at $w$ halts. This is a contradiction.
To see that the automaton is strongly nonlocal, we shall consider labelled \emph{path graphs}
that encode finite prefixes of the infinite Thue-Morse sequence.
The labelled path graphs are defined as follows.
%

%
%
%
Let $\omega$ denote the infinite Thue-Morse sequence of zeros and ones. The sequence
does not contain a cube factor. For each finite nonempty prefix $v$ of $\omega$,
let $\mathit{Path}(v)$ denote the
$(\{P_0,P_1,Q_1,Q_2,Q_3\},\{R\})$-model $M$ such that the following conditions hold.
\begin{enumerate}
\item
Note first that $\nu$ is a nonempty prefix of $\omega$, so  $\nu$ is a function 
$\nu:\{0,...,k\}\rightarrow\{0,1\}$ for some $k\in\mathbb{N}$.
The domain of the model $M$ is the set $\{0,...,k\}$.
\item
The model $M$ encodes a path graph, so for each $i,j\in\{0,...,k\}$,
we have $(i,j)\in R^M$ iff $| i- j | = 1$.
\item
$M$ encodes the finite prefix $\nu$ of the Thue-Morse sequence, so
the following conditions hold for each $i\in\{0,...,k\}$.
\begin{enumerate}
\item
We have $i\in P^{M}_0$ iff $\nu(i) = 0$.
\item
Similarly, we have $i\in P^{M}_1$ iff $\nu(i) = 1$.
\end{enumerate}
\item
Let $j\in\{1,2,3\}$.
For each $i\in\{0,...,k\}$, we have $i\in Q_j$ iff  $i = j-1$ $\mathrm{mod}$ $3$.
Thus we observe that each node of $M$, with the exception of the end nodes $0$ and $k$, is
properly labelled. (Recall the definition of proper labelling from the beginning of this section.)
\end{enumerate}
Since there exist structures $\mathit{Path}(v)$ of arbitrarily large finite sizes,
and since the Thue-Morse sequence is
cube-free, it is easy to see that our automaton $A$ is strongly nonlocal.
To see this, assume that there exists a automaton $A'$ such that
$A'$ and $A$ accept (and reject) exactly the same finite pointed $\mathrm{SB}(\Pi)$-models,
and furthermore, $A'$ specifies a local algorithm.
Let $n\in\mathbb{N}$ be the effective running time of $A'$ in the class
of finite pointed $\mathrm{SB}(\Pi)$-models. 
(We assume, w.l.o.g., that $n$ is greater than, say, $10$.)
Define the prefixes $\nu:\{0,...,5n\}\rightarrow \{0,1\}$
and $\nu':\{0,...,5n+1\}\rightarrow \{0,1\}$
of the Thue-Morse sequence.
Define the pointed models $\bigl(\mathit{Path}(\nu),\nu(3n)\bigl)$
and $\bigl(\mathit{Path}(\nu'),\nu'(3n)\bigl)$.
Consider the behaviour of our original automaton $A$ on these two
pointed models. We claim that $A$ accepts
exactly one of the two pointed models.
The halting of the pair $(A,\nu(3n))$ in the model ${Path}(\nu)$
is caused by detecting a violation of the proper labelling 
scheme  at the end point $5n$.
Similarly, the halting of $(A,\nu'(3n))$ in the model ${Path}(\nu')$
is caused by detecting a violation
at the end point $5n + 1$.
The distance from the node $\nu'(3n)$ to the node $\nu'(5n+1)$ is
exactly one step greater than
the distance from the node $\nu(3n)$ to the node $\nu(5n)$.
Thus $A$ accepts exactly one of the pointed models
$\bigl(\mathit{Path}(\nu),\nu(3n)\bigl)$
and $\bigl(\mathit{Path}(\nu'),\nu'(3n)\bigl)$.
Since the pointed models $\bigl(\mathit{Path}(\nu),\nu(3n)\bigl)$
and $\bigl(\mathit{Path}(\nu'),\nu'(3n)\bigl)$ look locally similar
to the automaton $A'$, whose effective running time is $n$,
the automaton $A'$ cannot differentiate between them.
Thus $A'$ does not halt on exactly the same finite
pointed $\mathrm{SB}(\Pi)$-models as $A$.
This is a contradiction.
\end{proof}
\section{Halting and Convergence in Arbitrary Networks}\label{arbitrary}
In this section we study a 
comprehensive collection of distributed computing models in a setting
that involves infinite networks in addition to finite ones.
We establish that \emph{every} halting distributed algorithm is in fact a local algorithm.
In fact, we show that this result relativises to any class of networks definable by a first-order theory.
The strategy of proof in this section is to first appropriately characterize acceptance and rejection
of automata in terms of definability in modal logic (see Lemma \ref{firstlemma}), and then use the compactness theorem
in order to obtain the desired end result
(see the proof of Theorem \ref{maintheorem}).
The characterizations we obtain extend the
characterizations in \cite{kuusi}.
%

%
%
%
Let $\Pi$ be a finite set of unary relation symbols, and let
$\mathcal{R}=\{\, R_1,...,R_k\, \}$ be a finite set binary relation symbols.
The set $T_0$ of $(\Pi,\mathcal{R},0)$-\emph{types} 
is defined to be the set containing a conjunction 
$$\bigwedge\limits_{P\, \in\, U} P\ \wedge\ \bigwedge\limits_{P\, \in\, \Pi\setminus U}\neg\, P$$
for each set $U\subseteq\Pi$, and no other formulae. We assume some standard bracketing and ordering of
conjuncts, so that there is exactly one conjunction for each set $U\subseteq\Pi$ in $T_0$. Note also
that $\bigwedge\emptyset\ =\ \top$. The $(\Pi,\mathcal{R},0)$-type $\tau_{(M,w),0}$
of a pointed $(\Pi,\mathcal{R})$-model $(M,w)$ is the unique formula $\varphi$ in $T_0$
such that $(M,w)\models\varphi$.
Assume then, recursively, that we have defined the set $T_{n}$ of $(\Pi,\mathcal{R},n)$-types.
Assume that $T_{n}$ is finite, and
assume also that each pointed $(\Pi,\mathcal{R})$-model $(M,w)$ satisfies \emph{exactly one} type in $T_n$.
We denote this unique type by $\tau_{(M,w),n}$. Define
\begin{multline*}
\tau_{(M,w),n+1}\ :=\ \tau_{(M,w),n} \wedge\ \bigwedge\{\ \langle R_i\rangle \tau\ |\
\tau\in T_n,\ (M,w)\models\langle R_i\rangle \tau,\ i\in\{\, 1,...,k\, \}\ \}\\
\wedge\ \bigwedge\{\ \neg\langle R_i\rangle \tau\ |\ \tau\in T_n,\ (M,w)\not\models\langle R_i\rangle\tau,
\ i\in\{\, 1,...,k\, \}\ \}.
\end{multline*}
The formula $\tau_{(M,w),n+1}$ is the $(\Pi,\mathcal{R},n+1)$-type of $(M,w)$. We assume some
standard ordering of conjuncts and bracketing, so that if two types $\tau_{(M,w),n+1}$ and $\tau_{(N,v),n+1}$
are equivalent, they are actually the same formula. We define $T_{n+1}$ to be the set 
$$\{\ \tau_{(M,w),n+1}\ |\ (M,w)\text{ is a pointed }(\Pi,\mathcal{R})\text{-model}\ \}.$$
We observe that the set $T_{n+1}$ is finite, and that for each pointed $(\Pi,\mathcal{R})$-model $(M,w)$, there exists
exactly one type $\tau\in T_{n+1}$ such that $(M,w)\models\tau$.
It is easy to show by a simple induction on modal
depth that each formula $\varphi$ of $\mathrm{ML}(\Pi,\mathcal{R})$
is equivalent to the disjunction of
exactly all $(\Pi,\mathcal{R},\mathit{md}(\varphi))$-types $\tau$ such that $\tau\models\varphi$.
Here $\tau\models\varphi$ means that for all pointed $(\Pi,\mathcal{R})$-models $(M,w)$, 
we have $(M,w)\models\tau\ \Rightarrow\ (M,w)\models\varphi$. (Note that $\bigvee\emptyset = \bot$.)
Define $\mathcal{T} := \{\, \tau\ |\ \tau\text{ is a }(\Pi,\mathcal{R},n)\text{-type for some }n\in\mathbb{N}\, \}$.
A $(\Pi,\mathcal{R})$-\emph{type automaton} $A$ 
is a $(\Pi,\mathcal{R})$-automaton whose set of states is $\mathcal{T}$.
The set of messages is also the set $\mathcal{T}$.
The initial transition function $\pi$ is defined such that the state of $A$ at $(M,w)$
in round $n=0$ is the $(\Pi,\mathcal{R},0)$-type $\tau_{(M,w),0}$.
The state transition funtion $\delta$ is defined as follows.
Let $n\in\mathbb{N}$. Let $(N_1,...,N_k)$ be a sequence of sets $N_i$ of $(\Pi,\mathcal{R},n)$-types.
Let $\tau_n$ be a $(\Pi,\mathcal{R},n)$-type.
If there exists a type 
\begin{multline*}
\tau_{n+1}\ :=\ \tau_n\ \wedge\ \bigwedge\{\ \langle R_1\rangle \tau\ |\
\tau\in N_1\ \}
\wedge\ \bigwedge\{\ \neg\langle R_1\rangle \tau\ |\ \tau\in T_n\setminus N_1\ \}\\
\vdots\\
\bigwedge\{\ \langle R_k\rangle \tau\ |\
\tau\in N_k\ \}
\wedge\ \bigwedge\{\ \neg\langle R_k\rangle \tau\ |\ \tau\in T_n\setminus N_k\ \},
\end{multline*}
we define $\delta((N_1,...,N_k),\tau_n)\, =\, \tau_{n+1}$.
Otherwise we define $\delta((N_1,...,N_k),\tau_n)$ arbitrarily.
On other kinds of input vectors, $\delta$ is also defined arbitrarily.
The message construction function $\mu$ is defined such that $\mu(\tau, R_i) = \tau$ for each $R_i$.
The sets of accepting and rejecting states can be defined differently
for different type automata.
It is easy to see that the state of any type automaton $A$ at $(M,w)$ in round $n$ is $\tau$
iff the $(\Pi,\mathcal{R},n)$-type of $(M,w)$ is $\tau$.
%

%
%
%
%
%
%
%
%
%
%
%
%

%
%
%
%
%
%
%
%
%
%
%
%
%

%
%

%
\begin{lemma}\label{firstlemma}
Let $\Pi$ and $\mathcal{R} = \{R_1,...,R_k\}$ be finite sets of unary and binary relation symbols,
respectively. Let $A$ be a $(\Pi,\mathcal{R})$-automaton.
Let $\mathcal{C}$ be the class of pointed $(\Pi,\mathcal{R})$-models.
The class $\mathcal{K}\subseteq\mathcal{C}$ of pointed models accepted by $A$ is definable
by a (possibly infinite)
disjunction $\bigvee\Phi$ of formulae of $\mathrm{ML}(\Pi,\mathcal{R})$. Also the 
class $\mathcal{J}\subseteq\mathcal{C}$ of pointed models rejected by $A$ is definable
by a (possibly infinite)
disjunction $\bigvee\Psi$ of formulae of $\mathrm{ML}(\Pi,\mathcal{R})$.
The $(\Pi,\mathcal{R},n)$-type of a pointed $(\Pi,\mathcal{R})$-model $(M,w)\in\mathcal{C}$ is in $\Phi$ iff
the automaton $A$ accepts $(M,w)$ in round $n$. Similarly, the $(\Pi,\mathcal{R},n)$-type of $(N,v)\in\mathcal{C}$
is in $\Psi$ iff the automaton $A$ rejects $(N,v)$ in round $n$.
%
%
\end{lemma}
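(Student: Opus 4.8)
The plan is to factor the behaviour of the general automaton $A$ through the $(\Pi,\mathcal{R},n)$-type hierarchy, exactly as was done for type automata: I would show that the round-$n$ state of $A$ at a pointed model depends only on that model's round-$n$ type. Concretely, the \emph{key claim} is that for each $n\in\mathbb{N}$ there is a function $h_n\colon T_n\to Q$ such that for every pointed $(\Pi,\mathcal{R})$-model $(M,w)$ the state $f_n(w)$ of $A$ at $(M,w)$ equals $h_n(\tau_{(M,w),n})$. This factorisation is the heart of the lemma; once it is available, the construction of the disjunctions is bookkeeping with the type hierarchy.

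I would prove the key claim by induction on $n$. For $n=0$ the state $f_0(w)=\pi(\{\,P\in\Pi\mid w\in P^M\,\})$ is determined by the set of unary predicates true at $w$, which is exactly the information recorded by $\tau_{(M,w),0}$, so $h_0$ is well defined. For the inductive step, recall that $f_{n+1}(w)=\delta((N_1,\dots,N_k),f_n(w))$ with $N_i=\{\,\mu(f_n(v),R_i)\mid v\in\mathit{succ}(R_i,w)\,\}$. By the induction hypothesis $f_n(v)=h_n(\tau_{(M,v),n})$, so $N_i=\{\,\mu(h_n(\tau),R_i)\mid \tau\in T_n \text{ and } \tau \text{ is realised among the } R_i\text{-successors of } w\,\}$; thus $N_i$ depends only on the set of round-$n$ types realised among the $R_i$-successors of $w$. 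But this is precisely the information encoded by the conjuncts $\langle R_i\rangle\tau$ and $\neg\langle R_i\rangle\tau$ of $\tau_{(M,w),n+1}$, since $(M,w)\models\langle R_i\rangle\tau$ holds iff some $R_i$-successor realises the full type $\tau$. Moreover $\tau_{(M,w),n}$ is a conjunct of $\tau_{(M,w),n+1}$, so $f_n(w)=h_n(\tau_{(M,w),n})$ is likewise a function of $\tau_{(M,w),n+1}$. Hence both arguments of $\delta$ are determined by $\tau_{(M,w),n+1}$, and setting $h_{n+1}(\tau_{(M,w),n+1}):=f_{n+1}(w)$ gives a well-defined function, because any two models with the same $(\Pi,\mathcal{R},n+1)$-type feed $\delta$ the same input.

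With the key claim in hand I would finish as follows. Since $\tau_{(M,w),m}$ is a conjunct of $\tau_{(M,w),n}$ for every $m\le n$, the whole initial segment $f_0(w),\dots,f_n(w)$ is determined by $\tau_{(M,w),n}$; consequently the predicate ``$f_n(w)\in F$ and $f_m(w)\notin G$ for all $m<n$'', that is, ``$A$ accepts $(M,w)$ in round $n$'', depends only on $\tau_{(M,w),n}$. I would then let $\Phi$ be the set of all $\tau\in T_n$ (over $n\in\mathbb{N}$) such that $A$ accepts in round $n$ the models realising $\tau$; by the previous sentence this is independent of the chosen representative, and the ``moreover'' clause holds by construction. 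The equivalence $(M,w)\models\bigvee\Phi\Leftrightarrow(M,w)\in\mathcal{K}$ then follows because $(M,w)$ satisfies exactly one type of each depth: if $A$ accepts $(M,w)$, take the first accepting round $n$, so $\tau_{(M,w),n}\in\Phi$ and $(M,w)\models\tau_{(M,w),n}$; conversely, if $(M,w)\models\tau$ for some $\tau\in\Phi\cap T_n$, then $\tau=\tau_{(M,w),n}$ since distinct depth-$n$ types are mutually exclusive, whence $A$ accepts $(M,w)$ in round $n$. The rejecting case yielding $\bigvee\Psi$ is identical with $F,G$ and ``accepts'' replaced by $G,F$ and ``rejects''.

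The main obstacle is the inductive step of the key claim, specifically verifying that the set-valued received data $N_i$ is a function of $\tau_{(M,w),n+1}$ alone. This is the one place where the definition of types and the definition of the automaton model genuinely interact, and it works precisely because the framework is set-based: $A$ receives the \emph{set} $N_i$ of messages rather than a multiset, and $\tau_{(M,w),n+1}$ records exactly which round-$n$ types occur among the successors, not how many times. The remaining steps are routine consequences of the type hierarchy being finite, cumulative (higher types imply lower ones) and jointly exhaustive at each depth.
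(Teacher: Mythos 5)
Your proposal is correct and follows essentially the same route as the paper: the paper's central inductive claim is that two pointed models with the same $(\Pi,\mathcal{R},n)$-type receive the same state from any automaton in every round $m\leq n$, which is just the two-model phrasing of your factorisation $f_n(w)=h_n(\tau_{(M,w),n})$, and the inductive step in both arguments hinges on exactly the point you single out, namely that the set-valued message tuple $(N_1,\dots,N_k)$ is determined by which round-$n$ types are realised among the successors, information recorded by the $\langle R_i\rangle\tau$ and $\neg\langle R_i\rangle\tau$ conjuncts of the round-$(n+1)$ type. The construction of $\Phi$ and $\Psi$ and the verification of the defining equivalences then proceed as in the paper.
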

\begin{proof}
Let $(M,w)$ be a pointed $(\Pi,\mathcal{R})$-model.
Let $B$ be a $(\Pi,\mathcal{R})$-automaton.
Let $n\in\mathbb{N}$. We let $B\bigl((M,w),n\bigr)$
denote the state of the automaton $B$ at the node $w$ in round $n$.
We shall first show that for all $n\in\mathbb{N}$ and all pointed $(\Pi,\mathcal{R})$-models $(M,w)$ and $(N,v)$,
if the models $(M,w)$ and $(N,v)$ satisfy exactly the same $(\Pi,\mathcal{R},n)$-type,
then $B\bigl((M,w),m\bigr) = B\bigl((N,v),m\bigr)$ 
for each $m\leq n$ and each $(\Pi,\mathcal{R})$-automaton $B$.
We prove the claim by induction on $n$.
For $n=0$, the claim holds
trivially by definition of the
transition function $\pi$.
Let $(M,w)$ and $(N,v)$ be pointed $(\Pi,\mathcal{R})$-models that satisfy the same
$(\Pi,\mathcal{R},n+1)$-type $\tau_{n+1}$.
Let $B$ be a $(\Pi,\mathcal{R})$-automaton and $\delta$ the transition function of $B$.
Call $q_n = B\bigl((M,w),n\bigr)$ and $q_{n+1} = B\bigl((M,w),n+1\bigr)$.
Let $N_1,...,N_k$ be sets of $(\Pi,\mathcal{R},n)$-types such that $\tau_{n+1}$ is the formula
\begin{multline*}
\tau_n\ \wedge\ \bigwedge\{\ \langle R_1\rangle \tau\ |\
\tau\in N_1\ \}
\wedge\ \bigwedge\{\ \neg\langle R_1\rangle \tau\ |\ \tau\in T_n\setminus N_1\ \}\\
\vdots\\
\bigwedge\{\ \langle R_k\rangle \tau\ |\
\tau\in N_k\ \}
\wedge\ \bigwedge\{\ \neg\langle R_k\rangle \tau\ |\ \tau\in T_n\setminus N_k\ \}.
\end{multline*}
Since the models $(M,w)$ and $(N,v)$ satisfy $\tau_{n+1}$, they must
satisfy the $(\Pi,\mathcal{R},n)$-type $\tau_n$. 
By the induction hypothesis, we therefore conclude that $B\bigl((M,w),m\bigr) = B\bigl((N,v),m\bigr)$
for each $m\leq n$. In particular, $B\bigl((N,v),n\bigr) = q_n$.
We must still show that $B\bigl((N,v),n+1\bigr) = q_{n+1}$.
Let us define that if $L$ is the set of exactly all $(\Pi,\mathcal{R},n)$-types $\tau$ such that
$(M,w)\models \langle R_i\, \rangle\tau$, then $L$ is the
\emph{set of $(\Pi,\mathcal{R},n)$-types realized by the $R_i$-successors of $w$}.
Let $i\in\{1,....,k\}$.
Since $(M,w)$ and $(N,v)$ satisfy the same $(\Pi,\mathcal{R},n+1)$-type $\tau_{n+1}$,
the set of  $(\Pi,\mathcal{R},n)$-types realized
by the $R_i$-successors of the point $w$
is the same as the set realized by the $R_i$-successors of $v$;
that set is $N_i$ in both cases. Therefore, by the induction hypothesis, the set of
states obtained by the $R_i$-successors of $w$
in round $n$ is exactly the same as the set of
states obtained by the $R_i$-successors of $v$
in round $n$. This holds for all $i\in\{1,...,k\}$.
Thus $w$ and $v$ receive exactly the same $k$-tuple of message sets in round $n+1$.
Therefore, since $B\bigl((N,v),n\bigr) = B\bigl((M,w),n\bigr) = q_n$, we conclude that
$B\bigl((N,v),n+1\bigr) = q_{n+1}$, as required.
%

%
%
%
%
%
%
%
%
%
%
%
%
%

%
We have now established that if $(M,w)$ an $(N,v)$ satisfy the same $(\Pi,\mathcal{R},n)$-type,
then any automaton $B$ produces the same state at $(M,w)$ and $(N,v)$ in all rounds $m\leq n$.
We are ready to complete the proof of the current lemma.
Let $A$ be an arbitrary $(\Pi,\mathcal{R})$-automaton.
%
%
%
Let $\mathbb{T}$ denote the set 
$$\{\ \tau\ |\ \tau\text{ is a }(\Pi,\mathcal{R},n)\text{-type for some }n\in\mathbb{N}\ \}.$$
%
%
%
%
%
%
%
Let $\Phi$ denote the set of exactly all types $\tau\in\mathbb{T}$ such
that for some $n$, the type $\tau$ is the $(\Pi,\mathcal{R},n)$-type of 
some pointed $(\Pi,\mathcal{R})$-model $(M,w)$, and furthermore, the automaton $A$
accepts $(M,w)$ in round $n$. Define the (possibly infinite) disjunction $\bigvee\Phi$.
We shall establish that for all pointed $(\Pi,\mathcal{R})$-models $(M,w)$,
we have $(M,w)\models\bigvee\Phi$ iff 
$A$ accepts $(M,w)$. 
Assume that $(M,w)\models\bigvee\Phi$. Thus
$(M,w)\models\tau_n$ for some $(\Pi,\mathcal{R},n)$-type $\tau_n$ of some pointed model $(M',w')$
accepted by $A$ in round $n$.
The models $(M,w)$ and $(M',w')$ satisfy the same $(\Pi,\mathcal{R},n)$-type $\tau_n$,
and thus $A$ produces exactly the same
state at $(M,w)$ and at $(M',w')$ in each round $l\leq n$.
Therefore $(M,w)$ must be accepted by $A$ in round $n$.
Assume that $(M,w)$ is accepted by the automaton $A$.
The pointed model $(M,w)$ is accepted in some round $n$,
and thus the $(\Pi,\mathcal{R},n)$-type of $(M,w)$ is one of the formulae in $\Phi$.
Therefore $(M,w)\models\bigvee\Phi$.
%

%
%
%
We have established that $\bigvee\Phi$ defines the class $\mathcal{K}\subseteq\mathcal{C}$.
Let $\mathcal{J}\subseteq\mathcal{C}$ be the class of pointed models
rejected by $A$.
Let $\Psi$ be the set of types $\tau\in\mathbb{T}$ such
that for some $n$, the type $\tau$ is the $(\Pi,\mathcal{R},n)$-type of 
some pointed $(\Pi,\mathcal{R})$-model $(M,w)$, and furthermore, the automaton $A$
rejects $(M,w)$ in round $n$. 
By an argument practically identical to the one above establishing that $\mathcal{K}$
is definable by $\bigvee\Phi$, one can establish that $\bigvee\Psi$
defines the class $\mathcal{J}$.
%
%
\end{proof}
%

%
%

%
\begin{theorem}[Compactness Theorem, see for example \cite{ebbinghaus}]
Assume $T$ is a set of formulae of\,  $\mathrm{FO}(\Pi,\mathcal{R})$ such that for each finite
subset $T\, '$ of\, $T$, there exists a $(\Pi,\mathcal{R})$-interpretation $(M,f)$ such that $(M,f)\models T\, '$.
Then there exists a  $(\Pi,\mathcal{R})$-interpretation $(M',f')$ such that $(M',f')\models T$.
\end{theorem}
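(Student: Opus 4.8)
The plan is to prove this by an ultraproduct construction together with \L o\'s's theorem. Let $I$ be the (nonempty) set of all finite subsets of $T$. By hypothesis I may fix, for each $i\in I$, a $(\Pi,\mathcal{R})$-interpretation $(M_i,f_i)$ with $(M_i,f_i)\models i$, say $M_i = (W_i,\ldots)$. For each formula $\varphi\in T$, set $X_\varphi := \{\, i\in I\ |\ \varphi\in i\, \}$. The family $\{X_\varphi\}_{\varphi\in T}$ has the finite intersection property: given $\varphi_1,\ldots,\varphi_n\in T$, the finite set $i_0 := \{\varphi_1,\ldots,\varphi_n\}$ is an element of $I$ and lies in each $X_{\varphi_j}$, so the intersection contains $i_0$. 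Hence I can extend this family to an ultrafilter $U$ on $I$ with $X_\varphi\in U$ for every $\varphi\in T$.

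Next I would form the ultraproduct $M' := \prod_{i\in I} M_i / U$, whose domain consists of classes $[g]_U$ of choice functions $g$ with $g(i)\in W_i$, where $g\sim_U h$ iff $\{\, i\ |\ g(i)=h(i)\, \}\in U$; the relations $P^{M'}$ and $R^{M'}$ are defined coordinatewise modulo $U$ in the usual way. I define the assignment $f'$ by $f'(x) := [\, i\mapsto f_i(x)\,]_U$. The key tool is \L o\'s's theorem: for every $\mathrm{FO}(\Pi,\mathcal{R})$-formula $\psi$, one has $(M',f')\models\psi$ iff $\{\, i\ |\ (M_i,f_i)\models\psi\,\}\in U$. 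Applying this to each $\varphi\in T$, and noting that $X_\varphi\subseteq\{\, i\ |\ (M_i,f_i)\models\varphi\,\}$ (since $\varphi\in i$ forces $(M_i,f_i)\models\varphi$) while $X_\varphi\in U$ and $U$ is upward closed, I conclude $(M',f')\models\varphi$ for every $\varphi\in T$. Thus $(M',f')\models T$, as required.

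The genuine work—and the step I expect to be the main obstacle—is establishing \L o\'s's theorem by induction on the structure of $\psi$. The atomic and Boolean cases are routine, using that $U$ is closed under supersets and finite intersections and, being an \emph{ultra}filter, that a set or its complement lies in $U$ (this settles negation). The existential case $\psi = \exists x\,\theta$ is the crux: the harder direction requires, for each index $i$ in a set belonging to $U$ at which a witness exists, \emph{choosing} such a witness $g(i)\in W_i$ and assembling these into a single class $[g]_U$. This appeal to the axiom of choice—also invoked, via Zorn's lemma, in extending the family $\{X_\varphi\}$ to the ultrafilter $U$—is exactly what makes the argument nonconstructive.

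An alternative route that avoids ultrafilters is the Henkin method: treat free variables as new constants, extend $T$ to a maximal finitely satisfiable Henkin theory $T^{*}$ by adjoining witnessing axioms $\exists x\,\theta\rightarrow\theta[x\mapsto c]$ while preserving finite satisfiability, and then build the canonical term model whose elements are the constants quotiented by the equality relation decided by $T^{*}$; a truth lemma would then show this model satisfies $T^{*}\supseteq T$. Since the signature $(\Pi,\mathcal{R})$ here is purely relational, the term model is especially simple, as the only closed terms are the constants themselves, which removes the usual bookkeeping for function symbols.
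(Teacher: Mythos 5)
The paper does not prove this statement at all: it is imported as a classical result with a citation to a standard logic textbook, so there is no in-paper argument to compare against. Your ultraproduct argument is a correct and standard proof of it. The outline is sound: the sets $X_\varphi$ do have the finite intersection property exactly as you argue, the ultrafilter lemma extends them to an ultrafilter $U$, the quotiented ultraproduct with the coordinatewise assignment $f'(x) = [\,i\mapsto f_i(x)\,]_U$ is the right object (and your quotienting by $\sim_U$ correctly handles the equality symbol present in the paper's grammar for $\mathrm{FO}(\Pi,\mathcal{R})$), and the transfer step via $X_\varphi\subseteq\{\,i\mid (M_i,f_i)\models\varphi\,\}$ and upward closure of $U$ is exactly right. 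You correctly flag that the remaining substance is {\L}o{\'s}'s theorem, and you identify the two genuine pressure points (choice of witnesses in the existential step, and the ultrafilter lemma); a fully written-out proof would still need to carry out that induction, but nothing in your plan would fail. The alternative Henkin route you sketch is equally standard and arguably closer in spirit to the textbook the paper cites; one minor point there is that you must ensure the term model is nonempty (the paper requires models to have nonempty domains), which the witnessing constants supply as soon as any are added, but is worth a sentence if $T$ forces no such constants.
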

It is a well-known immediate consequence of the compactness theorem that if $T\models\varphi$,
then there is a finite subset $T\, '$ of $T$ such that $T\, '\models\varphi$.
\begin{theorem}\label{maintheorem}
Let $\Pi$ and $\mathcal{R}$ be finite sets of unary
and binary relation symbols.
Let $\mathcal{C}$ be the class of all pointed $(\Pi,\mathcal{R})$-models. Let $\mathcal{H}\subseteq\mathcal{C}$
be a class definable by a first-order $(\Pi,\mathcal{R})$-theory.
If a $(\Pi,\mathcal{R})$-automaton converges in $\mathcal{H}$, then it
specifies a local algorithm in $\mathcal{H}$.
\end{theorem}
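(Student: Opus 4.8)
The plan is to combine Lemma~\ref{firstlemma} with the compactness theorem, passing through the standard translation of modal logic into first-order logic. By Lemma~\ref{firstlemma}, the class $\mathcal{K}$ of pointed models accepted by the converging automaton $A$ is defined by a disjunction $\bigvee\Phi$ of modal types, and the class $\mathcal{J}$ of pointed models rejected by $A$ is defined by a disjunction $\bigvee\Psi$ of modal types, where a type of modal depth $n$ lands in $\Phi$ (respectively $\Psi$) exactly when $A$ accepts (respectively rejects) in round $n$. Since $A$ converges in $\mathcal{H}$, every $(M,w)\in\mathcal{H}$ satisfies exactly one type from $\Phi\cup\Psi$; in particular $\mathcal{H}$ is partitioned by $\bigvee\Phi$ and $\bigvee\Psi$. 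The key idea is that to prove locality it suffices to show that only \emph{finitely many} of these types are actually needed over $\mathcal{H}$, since the maximal modal depth occurring among the relevant types then bounds the effective running time of $A$ in $\mathcal{H}$.

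First I would set up the first-order side. Let $T_{\mathcal{H}}$ be the first-order theory defining $\mathcal{H}$. For each type $\tau\in\Phi$, form its standard translation $\mathit{St}_x(\tau)$, a first-order formula with one free variable $x$. The crucial observation is that over $\mathcal{H}$ the accepting types and the rejecting types are mutually exclusive and jointly exhaustive, so each accepting type $\tau\in\Phi$ satisfies, over $\mathcal{H}$, the implication forcing it to be incompatible with every rejecting type. I would then argue by contradiction: suppose $A$ does \emph{not} specify a local algorithm in $\mathcal{H}$. Then acceptance or rejection happens at unboundedly large rounds, so one of $\Phi$, $\Psi$ contains types of arbitrarily large modal depth that are genuinely realized by members of $\mathcal{H}$. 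Say, without loss of generality, $\Phi$ contains a realized type $\tau_n$ of depth $n$ for each $n$ in an infinite set.

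Now comes the compactness argument, which I expect to be the heart of the proof. Consider the first-order theory $T_{\mathcal{H}}\cup\{\,\mathit{St}_x(\neg\sigma)\mid \sigma\in\Phi\,\}$, where I intend $\neg\sigma$ to say that the point $x$ does not satisfy the accepting type $\sigma$. Every \emph{finite} subset of this theory is satisfiable: a finite subset mentions only finitely many accepting types, hence types of bounded depth $d$; taking any realized accepting type $\tau_m$ with $m>d$ (which exists by our contradiction hypothesis) and a pointed model in $\mathcal{H}$ realizing it gives an interpretation in which $x$ satisfies $\tau_m$ but, by the uniqueness of types at each depth and the depth mismatch, violates none of the finitely many listed constraints. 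By compactness there is an interpretation $(M',f')$ satisfying the whole theory, so $M'\models T_{\mathcal{H}}$, whence $(M',f'(x))\in\mathcal{H}$, yet the point $f'(x)$ satisfies no type in $\Phi$ at all. This means $A$ does not accept $(M',f'(x))$; by the symmetric argument built from $\bigvee\Psi$ it does not reject it either, contradicting convergence in $\mathcal{H}$.

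The step I expect to require the most care is the satisfiability of finite subsets, where one must exploit both the finiteness of each type set $T_n$ (so that a point realizing a deep type $\tau_m$ provably falsifies every fixed lower-depth accepting type it does not equal) and the convention that the theory $T_{\mathcal{H}}$ defines $\mathcal{H}$ as a class of \emph{pointed} models, so that membership of $(M',f'(x))$ in $\mathcal{H}$ follows purely from $M'\models T_{\mathcal{H}}$. Once the contradiction is reached, I would conclude that only finitely many types from $\Phi\cup\Psi$ are realized in $\mathcal{H}$, let $N$ be the maximum of their modal depths, and observe that by Lemma~\ref{firstlemma} every $(M,w)\in\mathcal{H}$ is accepted or rejected in some round $m\leq N$. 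Hence $A$ specifies a local algorithm in $\mathcal{H}$ with effective running time at most $N$, as desired.
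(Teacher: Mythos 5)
Your overall strategy --- Lemma~\ref{firstlemma}, the standard translation, and compactness --- is the right one, and your contrapositive formulation (assume non-locality, show a suitable first-order theory is finitely satisfiable, contradict convergence) is the mirror image of the paper's argument, which instead observes that $T\cup X\cup Y$ with $X=\{\neg\mathit{St}_x(\varphi)\mid\varphi\in\Phi\}$ and $Y=\{\neg\mathit{St}_x(\varphi)\mid\varphi\in\Psi\}$ is unsatisfiable outright and extracts a finite unsatisfiable subset. But your compactness step has a genuine structural gap: you apply compactness only to $T_{\mathcal{H}}\cup\{\mathit{St}_x(\neg\sigma)\mid\sigma\in\Phi\}$. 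A model of that theory is a pointed model of $\mathcal{H}$ that is not accepted --- and convergence then merely forces it to be \emph{rejected}, which is no contradiction. The ``symmetric argument built from $\bigvee\Psi$'' produces a \emph{different} model that is not rejected; neither model, nor the pair of them, contradicts convergence. You must place the negations of the accepting types and of the rejecting types into a single theory, so that the one model compactness yields is neither accepted nor rejected.

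Second, your witness for finite satisfiability does not work as justified. A finite subset constrains $x$ to avoid finitely many types of depth at most $d$, and you propose a point realizing some accepting type $\tau_m$ with $m>d$. Such a point satisfies, at each depth $j\le d$, its own unique $(\Pi,\mathcal{R},j)$-type, and that type may itself lie in $\Phi$: by Lemma~\ref{firstlemma} this happens exactly when $A$ already accepts that point in round $j$, which is entirely possible for a point whose depth-$m$ type is in $\Phi$ (for instance, if the automaton enters an accepting state early and stays there, then \emph{all} of its later types are in $\Phi$ as well). So ``uniqueness of types at each depth and the depth mismatch'' does not guarantee that the listed constraints hold; for \emph{any} realized $\tau_m$ the claim is simply false. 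The correct witness is a pointed model of $\mathcal{H}$ that has neither accepted nor rejected by round $d$; such a model exists precisely because you assumed $A$ is not local, and for it Lemma~\ref{firstlemma} guarantees that its depth-$j$ type lies outside $\Phi\cup\Psi$ for every $j\le d$, so it satisfies all the negations in the finite subset together with $T_{\mathcal{H}}$. With these two repairs your argument goes through and becomes essentially equivalent to the paper's proof.
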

\begin{proof}
Assume a $(\Pi,\mathcal{R})$-automaton $A$ converges in $\mathcal{H}\not=\emptyset$.
Let $\mathcal{K}\subseteq\mathcal{H}$ be the class of pointed models accepted by $A$ in $\mathcal{H}$.
By Lemma \ref{firstlemma}\hspace{0.4mm}, there is a
disjunction  $\bigvee\Phi$ of types that defines $\mathcal{K}$ with respect to $\mathcal{H}$
and a disjunction $\bigvee\Psi$ of types that defines $\mathcal{H}\setminus\mathcal{K}$ with respect to $\mathcal{H}$.
The $(\Pi,\mathcal{R},n)$-type of a pointed $(\Pi,\mathcal{R})$-model $(M,w)\in\mathcal{H}$ is in $\Phi$ iff
the automaton $A$ accepts $(M,w)$ in round $n$. Similarly, the $(\Pi,\mathcal{R},n)$-type of $(N,v)\in\mathcal{H}$
is in $\Psi$ iff the automaton $A$ rejects $(N,v)$ in round $n$.
Let $T$  be a first-order theory that defines the class $\mathcal{H}$.
Call $X = \{\  \neg \mathit{St}_x(\varphi)\ |\ \varphi\in\Phi\ \}$ and
$Y = \{\ \neg \mathit{St}_x(\varphi)\ |\ \varphi\in\Psi\ \}$. 
Since $\bigvee\Phi$ defines $\mathcal{K}$ with
respect to $\mathcal{H}$ and $\bigvee\Psi$ defines
$\mathcal{H}\setminus\mathcal{K}$ with
respect to $\mathcal{H}$, we have $X\, \cup\, Y\, \cup\, T\models\bot$.
%
%
%
%
By the compactness theorem, there is a finite set $U\subseteq X\, \cup\, Y\cup\, T$
such that $U\models \bot$.
Let $V=U\cap X$ and $W = U\cap Y$.
Define $W^* = \{\, \varphi\in\mathrm{ML}(\Pi,\mathcal{R})\ |\ \mathit{St}_x(\varphi)\in W\, \}$, 
and define $V^*$, $X^*$ and $Y^*$ analogously.
We shall next establish that $\bigwedge W^*$ defines $\mathcal{K}$ with respect to $\mathcal{H}$.
Assume $(M,w)\in\mathcal{K}$.
Thus $(M,w)\models Y^*$,
and hence $(M,w)\models\bigwedge W^*$.
Assume then that $(N,v)\in\mathcal{H}\setminus\mathcal{K}$.
Therefore $(N,v)\models X^*$.
%
%
Since $(N,v)\in\mathcal{H}$, we have $N\models T$.
Now assume, for the sake of contradiction, that $(N,v)\models\bigwedge W^*$.
Therefore $(N,f[x\mapsto v])\models X\, \cup\, W\, \cup\, T$. 
Thus $(N,f[x\mapsto v])\models U$.
Since $U\models\bot$, we conclude that $(N,f[x\mapsto v])\models \bot$.
This is a contradiction.
We then establish that $\bigwedge V^*$
defines $\mathcal{H}\setminus\mathcal{K}$
with respect to $\mathcal{H}$.
Assume $(M,w)\in\mathcal{H}\setminus\mathcal{K}$.
Thus $(M,w)\models X^*$,
and hence $(M,w)\models\bigwedge V^*$.
Assume then that $(N,v)\in\mathcal{K}$.
Therefore $(N,v)\models Y^*$.
Since $(N,v)\in\mathcal{H}$, we have $N\models T$.
Now assume, for the sake of contradiction, that $(N,v)\models\bigwedge V^*$.
Therefore $(N,f[x\mapsto v])\models V\, \cup\, Y\, \cup\, T$. 
Thus $(N,f[x\mapsto v])\models U$.
Since $U\models\bot$, we conclude that $(N,f[x\mapsto v])\models \bot$.
This is a contradiction.
The finite sets $V^*$ and $W^*$ are negations of types. Let $\Phi'$ be the set of types whose negations are in $V^*$
and $\Psi'$ the set of types whose negations are in $W^*$.
Notice indeed that $\Phi'\subseteq\Phi$ and $\Psi'\subseteq\Psi$.
The disjunction $\bigvee\Phi'$
defines $\mathcal{K}$
with respect to $\mathcal{H}$, and the disjunction $\bigvee\Psi'$ defines $\mathcal{H}\setminus\mathcal{K}$
with respect to $\mathcal{H}$.
Let $l$ be the greatest integer $j$ such that there is a $(\Pi,\mathcal{R},j)$-type in $\Phi'\cup\Psi'$.
We claim that for each pointed model $(M,w)$ in $\mathcal{H}$, the automaton $A$ either accepts or rejects $(M,w)$
in some round $m\leq l$. To see this, let $(N,v)\in\mathcal{K}$. Thus $(N,v)\models\bigvee\Phi'$,
and hence $(M,w)\models\tau$ for some $(\Pi,\mathcal{R},i)$-type $\tau\in\Phi'$, where $i\leq l$. 
Since $\Phi'\subseteq\Phi$, we have $\tau\in\Phi$.
As we already stated in the beginning of the proof of the current theorem,
the $(\Pi,\mathcal{R},n)$-type of a pointed $(\Pi,\mathcal{R})$-model $(M,w)\in\mathcal{H}$ is in $\Phi$ iff
the automaton $A$ accepts $(M,w)$ in round $n$.
Thus the fact that $\tau\in\Phi$ implies that $(N,v)$ is
accepted in round $i$ by $A$. A similar argument applies when $(N,v)\in\mathcal{H}\setminus\mathcal{K}$.
Therefore $A$ specifies a local algorithm in $\mathcal{H}$.
\end{proof}
As we saw in Section \ref{preliminaries}\hspace{0.4mm}, each class $\mathcal{PN}(n)$ is definable by a related first-order
sentence $\varphi_{\mathrm{PN}(n)}$. Hence all halting algorithms in the port-numbering model 
are local algorithms when infinite networks are allowed.
In Section \ref{finite}\hspace{0.4mm}, we saw that finiteness gives rise to
nonlocal halting behaviour. It would be interesting to
investigate what kinds of other non-first-order properties (in addition to finiteness)
there are that lead to existence of nonlocal halting algorithms.
\section{Conclusion}
We have shown that a comprehensive variety of models of 
distributed computing cannot define universally halting nonlocal algorithms when
infinite networks are allowed.
In contrast, we have shown that in the finite, even very weak models of distributed
computing can specify universally halting nonlocal algorithms.
Our proof concerning infinite networks nicely demonstrates the potential
usefulness of modal logic in investigations
concerning distributed computing.
Our work in this article concerned \emph{anonymous networks},
i.e., networks without $\mathrm{ID}$-numbers. This choice was due
to the fact that in most natural theoretical frameworks for the
modelling of computation in infinite networks,
even the reading of local $\mathrm{ID}$s would take infinitely long, and thus synchronized
communication using $\mathrm{ID}$-numbers would be impossible.
This reasoning still leaves the possibility of investigating asynchronous
computation. A natural logical framework that can accomodate 
$\mathrm{ID}$-numbers can probably be based on 
some variant of hybrid logic (see \cite{Areces}).
Hybrid logic is an extension of modal logic with \emph{nominals};
nominals are formulae that hold in exactly one node.
 It remains open at this stage, however,
how asynchronicity should be treated.  Of course there are numerous possibilities, and
different logic-based frameworks for similar investigations exist,
but we would like to develop an approach that canonically
extends the approach introduced in \cite{hella,hella2}, developed further in \cite{kuusi},
and used in the current article.

\nocite{*}
\bibliographystyle{eptcs}
\bibliography{generic}
\end{document}